\documentclass[submission,copyright,creativecommons]{eptcs}

\providecommand{\ffrac}[2]{\ensuremath{\frac{\underline{#1}}{#2}}}

\usepackage{iftex}
\usepackage{amssymb,amsmath,amsthm,cmll}
\usepackage{listings,color,xcolor}
\hypersetup{hidelinks}

% HACK1: comment/un-comment these lines to set the draft watermark off/on
% \usepackage{draftwatermark}
% \SetWatermarkText{DRAFT
% %\\ please do not\\ distribute
% }
% \SetWatermarkScale{.5}

\newtheorem{theorem}{Theorem}

\lstset{
language=Python,
showstringspaces=false,
backgroundcolor=\color[rgb]{0.95,0.95,0.92},
commentstyle=\color[rgb]{0,0.6,0}, % green
keywordstyle=\color{magenta},
stringstyle=\color[rgb]{0.58,0,0.82}, % purple
numberstyle=\color[rgb]{0.5,0.5,0.5}, % gray
% numberstyle=\tiny\color[rgb]{0.5,0.5,0.5}, % gray
basicstyle=\ttfamily\footnotesize,
frame=lines,
numbers=left,
}

\ifpdf
  \usepackage{underscore}         % Only needed if you use pdflatex.
  \usepackage[T1]{fontenc}        % Recommended with pdflatex
\else
  \usepackage{breakurl}           % Not needed if you use pdflatex only.
\fi

\title{Using Z3 to Verify Inferences in Fragments of Linear Logic}
\author{Alen Docef \qquad\qquad Radu Negulescu
\institute{Department of Electrical and Computer Engineering\\
Virginia Commonwealth University\\
Richmond, Virginia, USA}
\email{adocef@vcu.edu \quad\qquad rnegulescu@vcu.edu}
\email{\hspace{100pt}rnegules@gmail.com}
\and
Mihai Prunescu
\institute{
Research Center for Logic, Optimization and Security\\
Faculty of Mathematics and Computer Science\\
University of Bucharest, Bucharest, Romania}
\institute{Simion Stoilow Institute of Mathematics\\
Romanian Academy, Bucharest, Romania}
\email{mihai.prunescu@imar.ro}
\email{mihai.prunescu@gmail.com}
}

\date{July 2023}

\begin{document}
\maketitle

\begin{abstract}
Linear logic is a substructural logic proposed as a refinement of classical and intuitionistic logics, with applications in programming languages, game semantics, and quantum physics. We present a template for Gentzen-style linear logic sequents that supports verification of logic inference rules using automatic theorem proving. Specifically, we use the Z3 Theorem Prover \cite{DeMoura2008} to check targeted inference rules based on a set of inference rules that are presumed to be valid. To demonstrate the approach, we apply it to validate several derived inference rules for two different fragments of linear logic: MLL+Mix (Multiplicative Linear Logic extended with a Mix rule) and MILL (Multiplicative Intuitionistic Linear Logic).

\begin{flushleft}
{\bf Keywords}: linear logic, MLL+Mix, MILL, Z3, inference rules\\
{\bf M.S.C. Classification}: 03B47, 03F52 
\end{flushleft}
\end{abstract}

\section{Introduction}

The Z3 Theorem Prover \cite{DeMoura2008} is a satisfiability modulo theories (SMT) solver targeted at software verification and program analysis. Besides SMT, the symbolic reasoning engine of Z3 also uses automatic reasoning, incremental solving, model generation, and other artificial intelligence techniques to determine satisfiability of a set of rules in a theory and to produce models. 

Linear logic \cite{Girard1987} is a substructural logic proposed as a refinement of classical and intuitionistic logics, with applications in programming languages, game semantics, and quantum physics. In \cite{mellies2022functorial}, the author makes a functorial connection between arbitrary models of multiplicative linear logic and the category of presheaves over arbitrary rings. Other far-reaching considerations connected with category theory are made by the same author in \cite{mellies2009categorical}. A more accessible approach to this connection
is presented in \cite{Schalk}. Connections with semantics for higher order quantum computing were studied in \cite{pagani2014applying}. A usual interpretation of linear logic, already intended by Girard, is that formulas do not hold values as true and false, but contain information about the availability and use of given resources. In this context, \cite{miller2004overview} presents an overview of linear logic programming. The article \cite{Baez2010} sketches a unified approach, a ``Rosetta stone'', based on categories as well, for interpreting linear logic in three seemingly unrelated domains: topology, quantum physics, and lambda calculus. Relations between linear logic and concurrency theory are an active area of research, for instance in \cite{aschieri2020par}. General presentations of linear logic can be found in: \cite{dicosmo2006linear}, \cite{dicosmoNDlinear}, \cite{girard1989proofs}, \cite{lafont1993introduction}, \cite{troelstra1996basic}, \cite{troelstra1992lectures}.

Two important fragments of linear logic are  multiplicative intuitionistic linear logic (MILL) and multiplicative linear logic with the Mix-rule (MLL+Mix). MILL is crystallized in \cite{benton1992term}, where its proof-theory is studied from a categorical theoretic point of view. A variant of MILL and its proof methods are discussed recently in \cite{gheorghiu2023proof}. MLL+Mix is crystallized in \cite{Abramsky1994} as a result of the authors' game-theoretic research in linear logic. They considered formulas as games and proofs as winning strategies. 

As observed for instance in \cite{chudigiewitsch2021computational}, in linear logic some usual proof steps, such as weakening and contraction, are restricted. The proof complexity in general linear logic is $\Sigma^1 _0$-hard. Some fragments are better behaved, the proof complexity in the multiplicative fragment being NP-complete. 
Still, proofs in linear logic are recognized to be computationally difficult.

We propose a template for modeling Gentzen-style sequents that supports verification of logic inference rules using Z3. Using our template, Z3 checks targeted inference rules based on a set of inference rules presumed to be valid. If a targeted rule does not hold, Z3 can produce a model assignment which serves as a counterexample.

We show how to apply our template to validate derived inference rules in two different fragments of linear logic. We find that this approach can flatten the learning curve when switching from classical logic to more general types of logic. Such transitions can be confusing. In particular, when concrete problems involving resource management are encoded in the satisfiability check of linear logic formulas, there is value added in using a theorem prover to offset the lack of common mathematical intuition. 

Theorem provers have been used to assist in generating Gentzen-style
proofs, for example Z3 in \cite{mcmillan2011} and Lean in \cite{patterson2023}.
However, we are not aware of a previous attempt to generalize
the approach using a template for modeling Gentzen sequents.
Our choice for Z3 was motivated by convenience and we hope it will
offer a benchmark for developing Lean tactics for MILL and MLL+Mix
as well.

\section{Modeling Inference Rules}

A {\it multiset} is a set with multiplicities: the same element can occur several times in a multiset, and the number of occurrences is called multiplicity. While the classical multiplicity of some element in a set is $1$, in a multiset elements may have as a multiplicity every set-cardinality. As our multisets of formulas will be later identified with formulas, we will consider only finite multisets. Implicitly, multiplicities are also finite. 

In the sequent notation, $\Gamma$, $\Delta$, etc., stand for multisets of formulas. $A$ and $B$ represent formulas. The turnstile symbol ($\vdash$, read `entails') separates the context (antecedents) from the conclusion (consequent). Generally, an inference rule in sequent calculus, including linear logic, takes the following form:
\[
\frac{\Gamma_1 \vdash A_1 \qquad \Gamma_2 \vdash A_2 \quad\ldots\quad \Gamma_n \vdash A_n}{\Gamma \vdash A}
\; \textit{(Name of Rule)}
\]
Above the line, we have the assumptions or premises of the rule, and below the line is the guarantee or conclusion of the rule, which holds true if all the antecedents are true. Each $\Gamma_i \vdash A_i$ is a sequent, where the context $\Gamma_i$ can be understood as a set of additional assumptions. Such a rule can be stated equivalently using Boolean logic:
\[
(\Gamma_1 \vdash A_1) \land (\Gamma_2 \vdash A_2) \land \ldots \land (\Gamma_n \vdash A_n) \rightarrow (\Gamma \vdash A),
\]
where $\land$ and $\rightarrow$ are logical conjunction and implication, respectively, meaning: "If $\Gamma_1 \vdash A_1$ and $\Gamma_2 \vdash A_2$ and $\ldots$ and $\Gamma_n \vdash A_n$ then $\Gamma \vdash A$".

For example, the tensor rule in linear logic often has the following form \cite[p.40]{Baez2010}:
\[
\frac{\Gamma \vdash A \qquad \Delta \vdash B}{\Gamma \otimes \Delta \vdash A \otimes B}
\; (\otimes)
\]
This rule states that if from context $\Gamma$ we can deduce $A$, and from context $\Delta$ we can deduce $B$, then from the context `$\Gamma$ tensor $\Delta$' ($\Gamma \otimes \Delta$) we can deduce `$A$ tensor $B$' ($A \otimes B$). The same tensor rule is sometimes stated in an alternate form \cite[p.3]{Heijltjes2016}:
\[
\frac{\Gamma,A \qquad \Delta,B}{\Gamma,\Delta, A \otimes B}
\; (\otimes) \qquad \textrm{meaning} \qquad
\frac{\vdash\Gamma,A \qquad \vdash\Delta,B}{\vdash \Gamma,\Delta, A \otimes B}
\; (\otimes)
\]

When the antecedent is missing, the turnstile symbol ($\vdash$), usually read as `entails', can be read as `provable'. That is, the consequent is provable from given rules, meaning the consequent is entailed by an empty context. In such cases, the turnstile can be omitted for legibility.

For the purpose of verification in Z3, we use its Python API to model inference rules as follows. First, we import the Z3 module and declare several objects used to model inference rules:
\begin{itemize} \setlength\itemsep{0em}
\item A solver, which will collect all the rules and verify their consistency.
\item A sort {\tt F} for formulas and contexts (i.e. multisets of formulas). Sorts are the Z3 model for types.
\item A function `entails' which takes two {\tt F}s (formulas or contexts) and returns a Boolean signifying that the left formula or context entails the right formula or context.
\item Two operations of linear logic (`tensor' and `lollipop') as functions of {\tt F}s returning {\tt F}s.
\item Three variables (called {\it constants} in Z3) of type {\tt F}.
\end{itemize}
The relevant code fragment is given below. The complete code for this section is provided in Listing~\ref{lst:generic}.

\begin{lstlisting}[numbers=none]
## Importing the z3 module
from z3 import *

## Declarations
ll = Solver()
F = DeclareSort('F')
entails = Function('entails', F, F, BoolSort())
tensor = Function('tensor', F, F, F)
lpop = Function('lollipop', F, F, F)
x, y, z = Consts('x y z', F)
\end{lstlisting}

The operators we declared satisfy inference rules. In our model, each inference rule is universally quantified by the variables in it. Then follows an expression that returns a Boolean. The expression is in prefix notation, i.e., operator followed by operands. Some expressions use {\tt Implies} and {\tt And}; these operators, predefined in Z3, have Boolean parameters and return values.

% For example, the tensor rule is:
% \[ (w \vdash x) \land (y \vdash z) \; \rightarrow \; (w \otimes y) \vdash (x \otimes z) \]
% The premises of the tensor rule are {\tt entails(w,x)} and {\tt entails(y,z)}, representing w $\vdash$ x and y $\vdash$ z, respectively. The conclusion of the rule is {\tt entails(tensor(w, y), tensor(x, z))}, that is, w $\otimes$ y $\vdash$ x $\otimes$ z. The rule postulates that the conjunction of the premises implies the conclusion.

Consider a simple example where there are only two given inference rules:
\[ {X \vdash X} \qquad \textrm{and} \qquad (X \otimes Y \; \vdash \; Z) \leftrightarrow (Y \vdash X \multimap Z) \]
These are a subset of the inference rules in MILL, used here for illustration only; we will revisit these two rules later in Section 4. We add these two rules to our {\tt ll} solver by invoking its {\tt .add} method:

\begin{lstlisting}[numbers=none]
## Given rules
ll.add(ForAll([x], entails(x, x))) # (i)
ll.add(ForAll([x,y,z], entails(tensor(x,y),z) == entails(y,lpop(x,z)))) # (c)
\end{lstlisting}

To prove that a derived inference rule is valid, we add its {\it negation} to the existing set of rules and we check the satisfiability of the expanded set of rules. Unsatisfiability proves that the new rule is valid. In our template, checking for satisfiability is implemented using the {\tt ll.check()} method. Upon executing this check, Z3 prints either {\tt `sat'} or {\tt `unsat'} depending on whether a model is found that satisfies the set of rules, including the negated rule.

\begin{itemize}
\item If Z3 prints {\tt `unsat'}, the proposed rule results from the previously added rules. To see that, let $r_1, \ldots, r_n$ be the given rules, and let $r$ be the new rule prior to negation. If the set of rules is unsatisfiable that means we have 
\[
 \lnot (r_1 \land r_2 \land \ldots \land r_n \land \lnot r)
\]
Therefore, $\lnot (r_1 \land r_2 \land \ldots \land r_n) \lor r$. Thus, $(r_1 \land r_2 \land \ldots \land r_n) \rightarrow r$. In short, if Z3 prints {\tt `unsat'}, we have $(r_1 \land r_2 \land \ldots \land r_n) \rightarrow r$.
\item If Z3 prints {\tt `sat'}, it found a model, an assignment for the variables in the rules, under which the new rule $r$ is false (its negation is true) while the previously added rules are all true. Thus, if Z3 prints {\tt `sat'}, we do not have $(r_1 \land r_2 \land \ldots \land r_n) \rightarrow r$. If Z3 prints {\tt `sat'}, we can also use {\tt ll.print()} to obtain the model that caused the satisfiability, that is, under which assignment of variables did the previous rules and the negation of the new rule hold true.
\end{itemize}

For example, this is the modus ponens rule in linear logic: $x \otimes (x \multimap y) \vdash y$. To show that this rule follows from the two given rules, we add its negation to the solver. Running the satisfiability check as described above, we obtain an {\tt `unsat'} result, thus proving the derived rule. The relevant code fragment is:

\begin{lstlisting}[numbers=none]
## Derived rules
ll.add(Not(ForAll([x,y], entails(tensor(x,lpop(x,y)),y))))

## Verification
print(ll.check())
print(ll.model())
\end{lstlisting}

It is important to verify consistency of the given inference rules without the new negated rule. An inconsistent set would produce {\tt `unsat'} results even without the new negated rule. As a result, the given set of inference rules, if inconsistent, would imply the new negated rule because {\tt false} implies anything. Having commented out the negated modus ponens rule, we verify that the given set of rules is consistent by running a satisfiability check. The solver produces a {\tt `sat'} result, indicating that the given set of rules is indeed consistent.

\section{Verifying MLL+Mix Properties}

Multiplicative Linear Logic (MLL) \cite{Heijltjes2016} is a fragment of linear logic \cite{Abramsky1994} that restricts the linear logic structure to the multiplicative operators. MLL+Mix is an extension of MLL with the Mix inference rule \cite{Girard1987}. MLL+Mix has been applied to game theory as shown in \cite[p.546, Table 1]{Abramsky1994}.
 
For the MLL inference rules we follow the notation in \cite{Heijltjes2016}. For each group of formulas in \cite[Fig.1 p.3]{Heijltjes2016} there is an implicit entails symbol with an empty antecedent context. For example, the conclusion of formula {\it (ax)} is written $A, A^\star$ representing actually $\vdash A, A^\star$. To represent more easily entails with an empty antecedent, we define a unary predicate {\tt provable} on our sort {\tt F}, that is, a function from {\tt F} to Booleans, as follows (for the complete code see Listing~\ref{lst:mllmix}):

\begin{lstlisting}[numbers=none]
## Declarations
provable = Function('provable', F, BoolSort())
\end{lstlisting}

After restoring the entails symbols, the given inference rules of MLL+Mix are rephrased as follows.
\begin{alignat*}{4}
\frac{\vdash\Gamma}{\vdash\Gamma,\bot} &\;(\bot) \quad \quad \quad&
\frac{}{\vdash 1} &\;(1) \quad \quad \quad &
\frac{\vdash\Gamma,A,B}{\vdash\Gamma,A \parr B} &\;(\parr) \quad \quad \quad \\
\frac{\vdash\Gamma,A \qquad \vdash\Delta,B}{\vdash\Gamma,\Delta,A \otimes B} &\;(\otimes) &
\frac{}{\vdash A,A^\star} &\;(ax) \quad \quad \quad &
\frac{\vdash\Gamma,A \qquad \vdash\Delta,A^\star}{\vdash\Gamma,\Delta} &\;(cut)
\end{alignat*}

For the Mix rule, we use the formula from \cite[p.546, Table 1]{Abramsky1994}
\[
\frac{\vdash\Gamma \qquad \vdash\Delta}{\vdash\Gamma,\Delta} \;(mix)
\]

The given inference rules of MLL+Mix are modeled in Z3 as follows:
\begin{lstlisting}[numbers=none]
## Declarations
comma = Function('comma', F, F, F)
par = Function('par', F, F, F)
tensor = Function('tensor', F, F, F)
dual = Function('dual', F, F)
g, d, l, a, b, bot, one = Consts('g d l a b bot one', F)

## Axioms for comma (multiset reunion): associativity and commutativity
ll.add(ForAll([a, b, g], comma(a,comma(b,g)) == comma(comma(a,b),g)))
ll.add(ForAll([a, b], comma(a,b) == comma(b,a)))

## Given rules (Heijltjes and Houston page 3)
ll.add(ForAll([g],Implies(provable(g), provable(comma(g,bot)))))
ll.add(provable(one))
ll.add(ForAll([g, a, b],Implies(
    provable(comma(comma(g,a),b)),
    provable(comma(g,par(a, b))))))
ll.add(ForAll([g, a, d, b],Implies(
    And(provable(comma(g,a)), provable(comma(d,b))),
    provable(comma(g,comma(d,tensor(a, b)))))))
ll.add(ForAll([a],provable(comma(a,dual(a)))))
ll.add(ForAll([g, d, a],Implies(
    And(provable(comma(g,a)), provable(comma(d,dual(a)))),
    provable(comma(g,d)))))

## Mix rule (https://www.pls-lab.org/en/Mix_rule)
ll.add(ForAll([g, d],Implies(And(provable(g), provable(d)),
    provable(comma(g,d)))))
\end{lstlisting}

We also introduced an additional operator, called `comma', which adjoins two {\tt F}s into a new {\tt F}. It models the union of multisets of formulas, or multisets and single formulas. For this operator, we define two additional rules (associativity and commutativity). These rules ensure that contexts (i.e. multisets of formulas) created using the comma operator are equivalent regardless of the ordering of formulas in them.

Different authors use different notations for linear logic and its fragments. For example, the same rule (the $\otimes$ `tensor' rule) is presented differently in \cite[p.40]{Baez2010} and \cite[p.3]{Heijltjes2016}. However, the two formulations can be linked by interpreting `comma' as `par'. This link also provides the rationale for our merging the contexts and formulas in the same sort. As in \cite{Baez2010} a context (multiset of formulas) is seen here as the par of the formulas in that context. Informally, par ($\parr$) can be thought of as a parallel composition of devices \cite{aschieri2020par}. In what follows, we show that this viewpoint over the meaning of contexts supports the verification of properties in two different fragments of linear logic, following the notations in \cite{Baez2010} and \cite{Heijltjes2016}.

First, we verify that our model for the rules of MLL+Mix is indeed satisfiable. Indeed, {\tt ll.check()} returns {\tt `sat'} on our model of the rules of MLL+Mix, indicating that our rules are consistent.

As shown in Section 2, we can verify a new inference rule in MLL+Mix by adding its negation to the solver and verifying the satisfiability of the resulting set of rules. Using this approach, we verified, one by one, the validity of the following derived rules from \cite[Fig. 2 and Fig. 3]{Heijltjes2016}:
\[
\frac{}{\vdash 1,\bot} \; (mix1) \qquad\qquad
\frac{}{\vdash A \otimes B, A^\star \parr B^\star} \; (mix2) \qquad\qquad
\frac{\vdash\Gamma,A \qquad \vdash\Delta,B \qquad \vdash\Lambda,A^\star,B^\star}{\vdash \Gamma,\Delta,\Lambda} \; (mix3)
\]
The relevant code is as follows:
\begin{lstlisting}[numbers=none]
## Derived rules (Heijltjes and Houston page 4, all "unsat")
# ll.add(Not(provable(comma(one,bot))))
# ll.add(Not(ForAll([a,b],
#     provable(comma(tensor(a,b), par(dual(a),dual(b)))))))
# ll.add(Not(ForAll([a,b,g,d,l],
#     Implies(
#         And(And(provable(comma(g,a)),provable(comma(d,b))),
#             provable(comma(comma(l,dual(a)),dual(b)))),
#         provable(comma(comma(g,d),l))))))
\end{lstlisting}

We also verify several alternate formulations to the Mix rule in the context of MLL, thought to be equivalent to Mix. We find that these formulations are not equivalent to Mix. One at a time, we check the following negated rules are satisfiable in the context of MLL + Mix, indicating that the respective rules do not hold.
\[
\vdash 1 \leftrightarrow \bot \qquad
% \vdash (  (A \otimes B)^\star \parr (A \parr B))  ) \qquad
\vdash ( \bot^\star \parr 1 ) \qquad
\vdash \bot
\]
The relevant code is:
\begin{lstlisting}[numbers=none]
## Invalid derived rules (all "sat")
# ll.add(Not(one == bot))
# ll.add(Not(provable(par(dual(bot),one))))
# ll.add(Not(provable(bot)))
\end{lstlisting}

To sum up the results of our investigations into MLL+Mix inference rules, we can state the following Theorem:

\begin{theorem}
\,\,\,\,{\it The following sequents (and rule) are provable in} MLL+Mix:
\begin{alignat*}{1}
\vdash 1,\bot \qquad
\vdash A \otimes B, A^\star \parr B^\star \qquad
\frac{\vdash\Gamma,A \qquad \vdash\Delta,B \qquad \vdash\Lambda,A^\star,B^\star}{\vdash \Gamma,\Delta,\Lambda}
\end{alignat*}
{\it The following sequents are not provable in} MLL+Mix:
\begin{alignat*}{1}
\vdash 1 \leftrightarrow \bot \qquad
%\vdash (A \otimes B)^\star \parr (A \parr B) \qquad
\vdash \bot^\star \parr 1 \qquad
\vdash \bot
\end{alignat*}
\end{theorem}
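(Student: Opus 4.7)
The plan is to apply the template from Section 2 directly to the MLL+Mix axiomatization given just above the theorem. Concretely, after loading the seven given inference rules together with the associativity and commutativity axioms for \texttt{comma}, I would first run \texttt{ll.check()} with no extra assertion to confirm the answer is \texttt{`sat'}. This consistency check is essential: as the paper already observes, an inconsistent axiomatization would make every candidate rule trivially ``derivable'' and the negative half of the theorem meaningless.

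For the three positive claims I would, one at a time, add the Z3 negation of the target sequent and expect \texttt{`unsat'}. As a sanity check I would sketch the underlying Gentzen derivations by hand: $\vdash 1,\bot$ is a single application of $(\bot)$ to the axiom $\vdash 1$ obtained from $(1)$; $\vdash A\otimes B, A^\star\parr B^\star$ follows by applying $(\otimes)$ to the two $(ax)$ instances $\vdash A, A^\star$ and $\vdash B, B^\star$, then merging $A^\star,B^\star$ into $A^\star\parr B^\star$ via $(\parr)$; and mix3 follows from two successive applications of $(cut)$, first on $A$ between $\vdash\Gamma,A$ and $\vdash\Lambda,A^\star,B^\star$, and then on $B$ with $\vdash\Delta,B$. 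Matching \texttt{`unsat'} results against these hand derivations gives me confidence that the encoding faithfully captures the given rules.

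For the three negative claims I would add the same kind of negated assertion and expect \texttt{`sat'}, then call \texttt{ll.model()} to extract an explicit interpretation of the sort \texttt{F} and of \texttt{provable}, \texttt{comma}, \texttt{par}, \texttt{tensor}, \texttt{dual} that validates all encoded rules while falsifying the target. The logical justification is the one spelled out in Section 2: any MLL+Mix derivation translates into a chain of implications over the encoded rules, so a model witnessing the negation is a certificate that no such derivation exists. I would still want to inspect the three models briefly by hand, because a ``sat'' answer only refutes derivability from whatever I have actually encoded; in particular I should check that none of the models relies on a pathological interpretation of \texttt{dual} that a true MLL+Mix model would forbid, and I would cross-reference with the standard semantics if anything looks suspect.

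The main obstacle I anticipate is not any individual case but Z3's behavior on the quantifier structure: the premises contain nested universal quantifiers over \texttt{comma}, which is itself associative and commutative, and this is exactly the configuration where SMT solvers often return \texttt{`unknown'} rather than a definitive verdict. If that happens I would mitigate it by pre-instantiating associativity and commutativity on the relevant contexts, or by asserting the AC-normal form of each candidate sequent explicitly, rather than by enriching the theory further. Handling this quantifier-instantiation issue, rather than the logical content of the six sub-claims, is where I expect to spend the real effort.
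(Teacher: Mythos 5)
Your proposal follows essentially the same route as the paper: encode the MLL+Mix rules and the comma AC axioms in Z3, confirm consistency of the axioms alone, then add the negation of each candidate and read \texttt{`unsat'} as provable and \texttt{`sat'} (with a witnessing model) as not provable, which is exactly how the paper establishes this theorem. Your added hand derivations for the three positive sequents and your caution about what a \texttt{`sat'} answer certifies are sound but supplementary; no gap.
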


The theory MLL+Mix can be substantially improved by adding some other axioms. Consider, as an example, the following form of Contraction $(C)$:
\[\frac{\vdash 1,A}{\vdash A} \;(C)\]

We add the Contraction rule $(C)$ to the solver as follows:
\begin{lstlisting}[numbers=none]
## Contraction (subset) rule and resulting theorem
# ll.add(ForAll([a, b], Implies(provable(comma(one,a)),provable(a))))
# ll.add(Not(provable(bot))) # unsat
\end{lstlisting}

The new theory MLL+Mix+C proves to be consistent by using Z3. This extension leads to a strictly stronger theory, as shown by Theorem 2 below:

\begin{theorem}
The theory MLL+Mix+C proves the sequent: \[\vdash \bot.\]
\end{theorem}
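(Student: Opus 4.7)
The plan is to derive $\vdash \bot$ by a single application of contraction $(C)$ to a sequent already established in Theorem~1. From Theorem~1 we know that MLL+Mix proves the sequent $\vdash 1, \bot$ (this is the rule $(mix1)$, which was verified to be \texttt{unsat} when negated). Since MLL+Mix+C is the extension of MLL+Mix by the rule
\[
\frac{\vdash 1, A}{\vdash A} \; (C),
\]
we may instantiate $(C)$ with $A := \bot$.

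First I would invoke Theorem~1 to get $\vdash 1, \bot$ as a theorem of MLL+Mix, hence a fortiori a theorem of MLL+Mix+C. Next I would apply the contraction rule $(C)$ to this sequent with the substitution $A := \bot$, which immediately yields $\vdash \bot$ as the conclusion of $(C)$.

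The proof is essentially a one-line derivation, so there is no real combinatorial obstacle; the only subtlety worth mentioning is ensuring that the instance is legitimate, namely that the ``$A$'' schematized in $(C)$ ranges over arbitrary formulas (including $\bot$), which is indeed how the rule is stated and modeled in Z3 (the universal quantifier \texttt{ForAll([a, b], \ldots)} in the code fragment ranges over all constants of sort \texttt{F}). The Z3 check reported in the excerpt (the commented-out \texttt{Not(provable(bot))} returning \texttt{unsat} in the presence of $(C)$) is exactly a mechanical confirmation of this two-step derivation, and it also certifies consistency of the enlarged theory in the sense that the derivation does not silently rely on an inconsistency already introduced by $(C)$.

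Finally, to justify the ``strictly stronger'' claim made just before the theorem, I would note that Theorem~1 already established $\vdash \bot$ is \emph{not} provable in MLL+Mix alone (the third unprovable sequent listed there), so adding $(C)$ strictly increases the set of provable sequents. This observation, combined with the one-step derivation above, yields Theorem~2.
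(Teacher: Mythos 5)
Your proof is correct and takes essentially the same approach as the paper: the decisive step in both is the application of $(C)$ with $A := \bot$ to the sequent $\vdash 1, \bot$. The only cosmetic difference is that the paper derives $\vdash 1, \bot$ directly from rules $(1)$ and $(\bot)$ in two explicit steps, whereas you obtain it by citing Theorem~1 (the $mix1$ sequent), which amounts to the same thing.
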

\begin{proof}
From the rule $(1)$ we get \[\vdash 1.\]
From the rule $(\bot)$, we get \[\vdash 1, \bot.\]
Finally, from the rule $(C)$ we get \[\vdash \bot.\] 
% End of Proof.
\end{proof}

It is however not certain that this extended theory is still a fragment of Linear Logic. Another possibility is to consider the following axiom, called $(\emptyset)$, which introduces sequents with empty hypothesis and empty conclusion:
\[\frac{}{\,\,\,\,\,\vdash\,\,\,\,\,} \,\,(\emptyset)\]
It is clear that MLL+Mix+$(\emptyset)$ proves $\vdash \bot$ if one also introduces the convention \[\forall\,x\,\,\,\,{\rm Comma}(\emptyset,x) = x.\]
By $(\emptyset)$ we get the empty sequent $\,\,\,\,\vdash\,\,\,\,$. By $(\bot)$, we get $\vdash ,\bot\,\,$ and applying the convention, this is $\vdash \bot$. The Z3 implementation of the new rule $(\emptyset)$ remains to be completed.

\section{Verifying MILL Properties}

A fragment of Linear Logic called Multiplicative Intuitionistic Linear Logic (MILL) is shown in \cite{Baez2010} to be closely related to models of topology, quantum physics, and lambda calculus. Using the technique in Section 2, we model MILL in Z3 and we investigate several properties of MILL. The complete code is provided in Listing~\ref{lst:mill}.

Our starting point is the set of rules {\it (i), (o), ($\otimes$), (a), (l), (r), (b), (c)} from \cite[p.40]{Baez2010}.
\begin{alignat*}{2}
\frac{}{X \vdash X} &\;(i) \qquad\qquad\qquad&
\frac{X \vdash Y \quad Y \vdash Z}{X \vdash Z} &\;(o)\\
\frac{W \vdash X \quad Y \vdash Z}{W \otimes Y \vdash X \otimes Z} &\;(\otimes)&
\ffrac{W \vdash (X \otimes Y) \otimes Z}{W \vdash X \otimes (Y \otimes Z)} &\;(a)\\
\ffrac{X \vdash I \otimes Y}{X \vdash Y} &\;(l)&
\ffrac{X \vdash Y \otimes I}{X \vdash Y} &\;(r)\\
\ffrac{W \vdash X \otimes Y}{W \vdash Y \otimes X} &\;(b)&
\ffrac{X \otimes Y \; \vdash \; Z}{Y \vdash X \multimap Z} &\;(c)
\end{alignat*}

Rule {\it (i)} with no premises is modeled in Z3 by simply adding {\tt entails(x,x)}, universally quantified. There is no need for an {\tt Implies} in this rule because the empty set of premises means the conjunction of premises in that set is true. Thus, rule {\it (i)} is $(\mathrm{true} \rightarrow (x \vdash x)$), which is equivalent to $(\lnot \mathrm{true} \lor (x \vdash x))$, which is equivalent to just $(x \vdash x)$. Finally, we model rule {\it (i)} by adding to the solver an expression with no Boolean logic operators. Rule {\it (i)}, as well as rules {\it (o)} and ($\otimes$) are modeled using the approach described in Section 2. The relevant code is:

\begin{lstlisting}[numbers=none]
## Given rules
ll.add(ForAll([x], entails(x, x))) # rule (i)
ll.add(ForAll([x,y,z], Implies(And(
    entails(x,y), entails(y,z)), entails(x, z)))) # rule (o)
ll.add(ForAll([w,x,y,z], Implies(And(
    entails(w,x), entails(y,z)), entails(tensor(w, y), tensor(x, z)))))
    # rule (tensor)
\end{lstlisting}

The double bar represents implication both ways, i.e., logical equivalence. Accordingly, rules {\it (a), (l), (r), (b), (c)} are modeled essentially by the technique in Section 2, except that instead of {\tt Implies} we use {\tt ==} for logic equivalence. The relevant code for these rules is:

\begin{lstlisting}[numbers=none]
## Given rules
ll.add(ForAll([w,x,y,z],
    entails(w,tensor(tensor(x,y),z)) == entails(w, tensor(x, tensor(y, z)))))
    # rule (a)
ll.add(ForAll([x,y],
    entails(x,tensor(I,y)) == entails(x,y))) # rule (l)
ll.add(ForAll([x,y],
    entails(x,tensor(y,I)) == entails(x,y))) # rule (r)
ll.add(ForAll([w,x,y],
    entails(w,tensor(x,y)) == entails(w,tensor(y,x)))) # rule (b)
ll.add(ForAll([x,y,z],
    entails(tensor(x,y),z) == entails(y,lpop(x,z)))) # rule (c)
\end{lstlisting}

After adding to the solver the eight rules of MILL, we attempt to verify several new inference rules. We start with two properties from \cite[p.41]{Baez2010}: modus ponens and internal composition. Ther relevant code for both rules is:

\begin{lstlisting}[numbers=none]
## Derived rules (all "unsat")
# ll.add(Not(ForAll([x,y], entails(tensor(x,lpop(x,y)),y)))) # rule (ev)
# ll.add(Not(ForAll([x,y,z], entails(tensor(lpop(x,y),lpop(y,z)),
#                                    lpop(x,z))))) # internal composition rule
\end{lstlisting}

The {\em modus ponens} rule is
\[ \frac{}{X \otimes (X \multimap Y) \vdash Y} \;(ev) \]
One way to interpret rule {\it (ev)} is by taking tensor ($\otimes$) to mean linear conjunction and lollipop ($\multimap$) to mean linear implication. Therefore, if we have X linear-and X linear-implies Y, we have Y. To verify rule {\it (ev)} we model it using the same technique as for rule (i) above, i.e. by adding its negation to the set of given rules and checking the satisfiability of the extended set of rules. Since the result is {\tt `unsat'}, we conclude that rule {\it (ev)} holds in MILL.

Next we verify a more complex rule, called {\em internal composition} in \cite[p.41]{Baez2010}:
\[ (X\multimap Y) \otimes (Y\multimap Z) \vdash (X\multimap Z) \]
Informally, taking $\otimes$ for linear-and and $\multimap$ for linear-implication, the internal composition rule says that we can compose chained linear-implications. To verify internal composition in MILL, we add the internal composition rule, negated. We also comment out the modus ponens rule so it won't interfere with the satisfiability check. Running the satisfiability check again, we obtain {\tt `unsat'}, meaning that the internal composition rule holds in MILL.

Two important operators in linear logic are dual and par, denoted here by $\star$ and $\parr$, respectively, following \cite{Heijltjes2016}. Rules {\it (i)} through {\it (c)} above do not include rules for dual and par, but we can construct dual and par by new rules as follows:
\begin{alignat*}{2}
&\textrm{Dual:} \qquad & X^\star   & = X \multimap I \\
&\textrm{Par:}         & X \parr Y & = X^\star \multimap Y
\end{alignat*}
Here $I$ is a constant which is the unit of tensor; its properties are specified in rules {\it (l)} and {\it (r)}. These rules are implemented as follows:
\begin{lstlisting}[numbers=none]
## Given rules for dual and par
ll.add(ForAll([x], dual(x) == lpop(x, I))) # dual
ll.add(ForAll([x,y], par(x, y) == lpop(dual(x), y))) # par
\end{lstlisting}

Now we can verify a few properties of dual and par in MILL by adding the respective negated rules to the solver, as shown in the code fragment below. We emphasize that these properties are verified one at a time by un-commenting its corresponding line while commenting all other properties.
\begin{lstlisting}[numbers=none]
## Proving properties of dual and par (all "unsat")
# ll.add(Not(ForAll([x], entails(x, dual(dual(x)))))) # x |- x dual dual
# ll.add(Not(ForAll([x,y], entails(par(dual(x), y), lpop(x, y)))))
#     # X dual par Y |- X -o Y
# ll.add(Not(ForAll([x,y],
#     entails(par(x, y), dual(tensor(dual(x), dual(y)))))))
#     # x par y |- - (- x tensor - y)
#
## out of memory
# ll.add(Not(ForAll([x], entails(dual(dual(x)), x)))) # x dual dual |- x
# ll.add(Not(ForAll([x,y], entails(lpop(x, y), par(dual(x), y)))))
#     # X -o Y |- X dual par Y
# ll.add(Not(ForAll([x,y],
#     entails(dual(tensor(dual(x), dual(y))), par(x, y)))))
#     # - (- x tensor - y) |- x par y
\end{lstlisting}

First we want to know to what extent dual is an involution in MILL, so we verify the rules $X^{\star\star} \vdash X$ and $X \vdash X^{\star\star}$ by running the solver check method. The solver yields {\tt `unsat'} for the second rule, proving that the rule holds in MILL. The solver runs out of memory for the first rule. Details about the hardware and software used to execute the prover will be given later in the paper.
 
The next two rules to be checked express relationships between par ($\parr$) and linear-implies ($\multimap$):
\[ X^\star \parr Y \vdash X\multimap Y \qquad \textrm{and} \qquad X\multimap Y \vdash X^\star \parr Y \]
Adding, one at a time, the two negated rules, the solver yielded {\tt `unsat'} for the first rule and ran out of memory for the second rule. This means that the first rule holds in MILL and a counterexample could not be easily found in MILL for the second rule.

Now we verify relationships between par ($\parr$) and tensor ($\otimes$):
\[ X \parr Y \vdash (X^\star \otimes Y^\star)^\star \qquad \textrm{and} \qquad (X^\star \otimes Y^\star)^\star \vdash X \parr Y \]
The first rule holds (the solver yields {\tt `unsat'}), and for the second rule the solver runs out of memory, as above.

Finally, we prove that $I$ is self-dual. We also check to what extent $I$ is a neutral element for the par ($\parr$) operator.
\[I^\star \vdash I \qquad \textrm{and} \qquad I \vdash I^\star  \]
\[ X \parr I \vdash X \qquad \textrm{and} \qquad X \vdash X \parr I \]
The respective negated rules are implemented as shown in the code below. Three of the four rules hold (the solver yields {\tt `unsat'}).  For the rule $X \parr I \vdash X$, the solver runs out of memory, and we cannot prove anything.

\begin{lstlisting}[numbers=none]
## Proving properties of dual and par (all "unsat")
# ll.add(Not(entails(dual(I),I)))
# ll.add(Not(entails(I,dual(I))))
# ll.add(Not(ForAll([x],entails(x,par(x,I)))))
#
## out of memory
# ll.add(Not(ForAll([x],entails(par(x,I),x))))
\end{lstlisting}

Recalling the intuitionistic nature of this logic, some of the propositions that we could not prove or disprove by Z3 are unlikely to be consequences of the theory. It is however remarkable that it is so difficult for Z3 to construct counterexamples. We believe that some of the propositions that caused Z3 to run out of memory are actually consistent with the theory MILL.

To sum up the results of our investigations into MILL inference rules, we restate the derived inference rules by means of a Theorem. We use the ($\dashv\vdash$) symbol to denote bidirectional entailment.

\begin{theorem}
\,\,\,\,The following sequents are provable in MILL:
\[
\begin{array}{r@{\;}c@{\;}l}
X \otimes (X \multimap Y) & \vdash & Y \\
(X\multimap Y) \otimes (Y\multimap Z) & \vdash & (X\multimap Z) \\
X & \vdash & X^{\star\star} \\
X^\star \parr Y  & \vdash & X\multimap Y \\
X \parr Y & \vdash & (X^\star \otimes Y^\star)^\star \\
X  & \vdash & X \parr I \\
I^\star & \dashv\vdash & I \\
\end{array}
\]
\end{theorem}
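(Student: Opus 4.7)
The plan is to verify each of the seven claimed sequents using the Z3-based methodology established in Section 2, applied to the MILL axiomatization already introduced: rules $(i)$ through $(c)$ from Baez and Stay together with the definitional rules $X^\star = X \multimap I$ and $X \parr Y = X^\star \multimap Y$. For each sequent $S$ in the statement, I would add to the solver the negation of $S$, universally quantified over its free variables, and invoke \texttt{ll.check()}. By the argument recalled in Section 2, a result of \texttt{`unsat'} certifies that the MILL axioms entail $S$. The bidirectional entailment $I^\star \dashv\vdash I$ is handled by splitting it into its two directions and verifying each independently.

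Before running any derived-rule check, I would first confirm that the base theory (the eight MILL rules together with the definitions of dual and par) is itself satisfiable, since an inconsistent base would trivially ``prove'' every candidate rule. Only after witnessing \texttt{`sat'} on the base theory alone would I toggle negated goals on, one at a time, commenting out all other negated goals to keep each check independent. The first two sequents (modus ponens and internal composition) can be verified using only rules $(i)$ through $(c)$; the remaining five require that the definitional rules for $\star$ and $\parr$ are also present in the solver at the time of the check.

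The principal obstacle I expect is not logical difficulty but Z3's memory consumption. As noted in the discussion preceding the theorem, several candidate sequents in the neighbourhood of this list, namely $X^{\star\star} \vdash X$, $X \multimap Y \vdash X^\star \parr Y$, $(X^\star \otimes Y^\star)^\star \vdash X \parr Y$, and $X \parr I \vdash X$, caused the solver to exhaust memory rather than return a verdict, and were therefore deliberately excluded from the statement. The particular choice of directions retained in the theorem, for instance keeping $X \vdash X^{\star\star}$ but dropping its converse, and $X \vdash X \parr I$ but dropping its converse, is precisely the subset for which termination with \texttt{`unsat'} was observed. Thus the proof reduces to executing the seven (or eight, counting both directions of $I^\star \dashv\vdash I$) individual Z3 runs and reporting that each returns \texttt{`unsat'}.
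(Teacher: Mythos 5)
Your proposal matches the paper's own argument: the theorem is established exactly by adding the negation of each listed sequent (one at a time, with both directions of $I^\star \dashv\vdash I$ checked separately) to the solver containing the eight MILL rules plus the definitional rules for dual and par, and observing \texttt{`unsat'} in each case, with the omitted converse directions being precisely those on which Z3 ran out of memory. The approach and the selection of verified directions are the same as in the paper, so there is nothing to add.
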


Finally, a few details of the software and hardware platforms we used. 
The experiments have been run on
% two different platforms, a regular desktop computer and 
a high performance system with four 10-core Intel Xeon E7-4870 with 2.4 GHz clock, 30 MB cache, and 1 TB total RAM. The software is Python version 3.9.13 running in a Rocky Linux operating system, version 8.7. Successful experiments completed in a few seconds. Experiments that ran out of memory never completed.
% \begin{itemize}
% \item The desktop computer has two 4-core Intel Core i7-6700 CPUs with 3.4 GHz clock, 8 MB cache, and 8 GB total RAM. The software is Python version 3.8.10 running in a Windows 10 Enterprise edition operating system, version 22H2.
% \item The high performance has with four 10-core Intel Xeon E7-4870 with 2.4 GHz clock, 30 MB cache, and 1 TB total RAM. The software is Python version 3.9.13 running in a Rocky Linux operating system, version 8.7.
% \end{itemize}

%% Am putea valida teoremele care au umplut memoria cu proverul online?
% https://click-and-collect.linear-logic.org/
%% Radu a validat:
% a \otimes 1 |- a 
% a |- a \otimes 1 

\section{Conclusions}

Beyond the inference rules in MILL and MLL+Mix, our modeling exercise has led us to develop rules for handling comma (union of multisets) and entails with empty antecedent context.

The exercise has shown that several known properties of MILL and MLL+Mix can be proven, as expected. It also clarifies that MLL rules are insufficient for several other inference rules to be derived from Mix. Several rules that we checked caused Z3 to fill up the available memory and run out of time as a result.

Following \cite{LeanAndZ3}, further work may include using Z3 as a tactic in Lean, another theorem prover.
Lean uses different technologies than Z3 and the combination of Lean and Z3 can prove the rules that caused Z3 to run out of memory.

Successful validation of the derived inference rules suggests that the proposed practical approach has the potential to be more widely applicable in other formal logic systems, as Linear Temporal Logic (see e.g. \cite{grigorerosu}, \cite{dimacatalin}), Reachability Logic (e.g. \cite{rosustefanescu}), IMLL (e.g. \cite{gheorghiu2023proof}) and so on.

\section{Appendix}

\begin{lstlisting}[caption={Python code for modeling inference rules},label={lst:generic}]
## Importing the z3 module
from z3 import *

## Declarations
ll = Solver()
F = DeclareSort('F')
entails = Function('entails', F, F, BoolSort())
tensor = Function('tensor', F, F, F)
lpop = Function('lollipop', F, F, F)
x, y, z = Consts('x y z', F)

## Given rules
ll.add(ForAll([x], entails(x, x))) # (i)
ll.add(ForAll([x,y,z], entails(tensor(x,y),z) == entails(y,lpop(x,z)))) # (c)

## Derived rules
ll.add(Not(ForAll([x,y], entails(tensor(x,lpop(x,y)),y))))

## Verification
print(ll.check())
print(ll.model())
\end{lstlisting}

\begin{lstlisting}[caption={Python code for modeling MLL+Mix inference rules},label={lst:mllmix}]
## Importing the z3 module
from z3 import *

## Declarations
ll = Solver()
F = DeclareSort('F')
entails = Function('entails', F, F, BoolSort())
provable = Function('provable', F, BoolSort())
comma = Function('comma', F, F, F)
par = Function('par', F, F, F)
tensor = Function('tensor', F, F, F)
dual = Function('dual', F, F)
g, d, l, a, b, bot, one = Consts('g d l a b bot one', F)

## Axioms for comma (multiset reunion): associativity and commutativity
ll.add(ForAll([a, b, g], comma(a,comma(b,g)) == comma(comma(a,b),g)))
ll.add(ForAll([a, b], comma(a,b) == comma(b,a)))

## Given rules (Heijltjes and Houston page 3)
ll.add(ForAll([g],Implies(provable(g), provable(comma(g,bot)))))
ll.add(provable(one))
ll.add(ForAll([g, a, b],Implies(
    provable(comma(comma(g,a),b)),
    provable(comma(g,par(a, b))))))
ll.add(ForAll([g, a, d, b],Implies(
    And(provable(comma(g,a)), provable(comma(d,b))),
    provable(comma(g,comma(d,tensor(a, b)))))))
ll.add(ForAll([a],provable(comma(a,dual(a)))))
ll.add(ForAll([g, d, a],Implies(
    And(provable(comma(g,a)), provable(comma(d,dual(a)))),
    provable(comma(g,d)))))

## Mix rule (https://www.pls-lab.org/en/Mix_rule)
ll.add(ForAll([g, d],Implies(And(provable(g), provable(d)),
    provable(comma(g,d)))))

## Derived rules (Heijltjes and Houston page 4, all "unsat")
# ll.add(Not(provable(comma(one,bot))))
# ll.add(Not(ForAll([a,b],
#     provable(comma(tensor(a,b), par(dual(a),dual(b)))))))
# ll.add(Not(ForAll([a,b,g,d,l],
#     Implies(
#         And(And(provable(comma(g,a)),provable(comma(d,b))),
#             provable(comma(comma(l,dual(a)),dual(b)))),
#         provable(comma(comma(g,d),l))))))

## Invalid derived rules (all "sat")
# ll.add(Not(one == bot))
# ll.add(Not(provable(par(dual(bot),one))))
# ll.add(Not(provable(bot)))

## Contraction (subset) rule and resulting theorem
# ll.add(ForAll([a, b], Implies(provable(comma(one,a)),provable(a))))
# ll.add(Not(provable(bot))) # unsat

## Verification
print(ll.check())
print(ll.model())
\end{lstlisting}

\newpage
\begin{lstlisting}[caption={Python code for modeling MILL inference rules},label={lst:mill}]
## Importing the z3 module
from z3 import *

## Declarations
ll = Solver()
F = DeclareSort('F')
entails = Function('entails', F, F, BoolSort())
par = Function('par', F, F, F)
tensor = Function('tensor', F, F, F)
lpop = Function('lollipop', F, F, F)
dual = Function('dual', F, F)
x, y, z, w, I = Consts('x y z w I', F)

## Given rules
ll.add(ForAll([x], entails(x, x))) # rule (i)
ll.add(ForAll([x,y,z], Implies(And(
    entails(x,y), entails(y,z)), entails(x, z)))) # rule (o)
ll.add(ForAll([w,x,y,z], Implies(And(
    entails(w,x), entails(y,z)), entails(tensor(w, y), tensor(x, z)))))
    # rule (tensor)
ll.add(ForAll([w,x,y,z],
    entails(w,tensor(tensor(x,y),z)) == entails(w, tensor(x, tensor(y, z)))))
    # rule (a)
ll.add(ForAll([x,y],
    entails(x,tensor(I,y)) == entails(x,y))) # rule (l)
ll.add(ForAll([x,y],
    entails(x,tensor(y,I)) == entails(x,y))) # rule (r)
ll.add(ForAll([w,x,y],
    entails(w,tensor(x,y)) == entails(w,tensor(y,x)))) # rule (b)
ll.add(ForAll([x,y,z],
    entails(tensor(x,y),z) == entails(y,lpop(x,z)))) # rule (c)

## Derived rules (all "unsat")
# ll.add(Not(ForAll([x,y], entails(tensor(x,lpop(x,y)),y)))) # rule (ev)
# ll.add(Not(ForAll([x,y,z], entails(tensor(lpop(x,y),lpop(y,z)),
#                                    lpop(x,z))))) # internal composition rule

## Given rules for dual and par
ll.add(ForAll([x], dual(x) == lpop(x, I))) # dual
ll.add(ForAll([x,y], par(x, y) == lpop(dual(x), y))) # par

## Proving properties of dual and par (all "unsat")
# ll.add(Not(ForAll([x], entails(x, dual(dual(x)))))) # x |- x dual dual
# ll.add(Not(ForAll([x,y], entails(par(dual(x), y), lpop(x, y)))))
#     # X dual par Y |- X -o Y
# ll.add(Not(ForAll([x,y],
#     entails(par(x, y), dual(tensor(dual(x), dual(y)))))))
#     # x par y |- - (- x tensor - y)
# ll.add(Not(entails(dual(I),I)))
# ll.add(Not(entails(I,dual(I))))
# ll.add(Not(ForAll([x],entails(x,par(x,I)))))

## out of memory
# ll.add(Not(ForAll([x], entails(dual(dual(x)), x)))) # x dual dual |- x
# ll.add(Not(ForAll([x,y], entails(lpop(x, y), par(dual(x), y)))))
#     # X -o Y |- X dual par Y
# ll.add(Not(ForAll([x,y],
#     entails(dual(tensor(dual(x), dual(y))), par(x, y)))))
#    # - (- x tensor - y) |- x par y
# ll.add(Not(ForAll([x],entails(par(x,I),x))))
# ll.add(Not(ForAll([x,y,z],
#     entails(par(tensor(x,y),z), tensor(par(x,z),par(y,z))))))
#     # distributivity par tensor
# ll.add(Not(ForAll([x,y,z],
#     entails(tensor(par(x,z),par(y,z)), par(tensor(x,y),z)))))
#     # distributivity par tensor

## Verification
print(ll.check())
print(ll.model())
\end{lstlisting}

%\section{Bibliography}
\nocite{*}
\bibliographystyle{eptcs}
\bibliography{generic}
\end{document}